\sffamily\color{gray}\arabic*,
\sffamily\color{gray},
\tikzset{
  treenode/.style={
    circle,
    fill=black,
    inner sep=.2em,
  },
  subtree/.style={
    isosceles triangle,
    shape border rotate=90,
    isosceles triangle apex angle=60,
    inner sep=0em,
    minimum size=2.5em,
    anchor=apex,
    draw
  },
  st/.style={
    signal, draw,
    signal to=west,
    anchor=west,
  },
}
\newlist{inputoutput}{description}{1}
\setlist[inputoutput]{labelwidth=4em, font=\upshape\scshape\sffamily, labelsep=0pt, leftmargin=4em}
\newcommand{\lm}{\operatorname{lm}}
\newcommand{\sfrak}{\mathfrak{s}}
\newcommand{\lcm}{\operatorname{lcm}}
\newcommand{\K}{\mathbb{K}}
\newcommand{\N}{\mathbb{N}}
\newcommand{\mop}[1]{\operatorname{#1}}
\newcommand{\dom}{\operatorname{dom}}
\renewcommand{\st}{\ \middle|\ }
\newcommand{\Ffour}{\textsc{\texorpdfstring{F\textsubscript{4}}{F4}}\xspace}
\newcommand{\Ffive}{\textsc{\texorpdfstring{F\textsubscript{5}}{F5}}\xspace}
\newcommand{\ideal}{\operatorname{\mathbb{I}}}
\newcommand{\crit}{\operatorname{crit}}
\newcommand{\setof}[2]{\left\{#1\;\middle|\;#2\right\}}
\newcommand{\set}[1]{\left\{#1\right\}}
\newcommand{\pmap}{\rightharpoonup}
\theoremstyle{acmplain}
\newtheorem{theorem}{Theorem}
\newtheorem{lemma}[theorem]{Lemma}
\newtheorem{corollary}[theorem]{Corollary}
\newtheorem{proposition}[theorem]{Proposition}
\theoremstyle{acmdefinition}
\newtheorem{definition}[theorem]{Definition}
\begin{document}

\title[A data structure for monomial ideals
with applications to signature Gröbner bases]{A data structure for monomial ideals\\
with applications to signature Gröbner bases}

\author{Pierre Lairez}
\orcid{0000-0003-3756-0151}
\affiliation{
  \institution{Inria, Université Paris-Saclay}
  \city{Palaiseau}
  \country{France}
}
\email{pierre.lairez@inria.fr}

\author{Rafael Mohr}
\orcid{0009-0000-9869-7122}
\affiliation{
  \institution{KU Leuven, Dep. of Computer Science}
  \city{Leuven}
  \country{Belgium}
}
\email{rafaeldavid.mohr@kuleuven.be}

\author{Théo Ternier}
\orcid{0009-0007-7508-2863}
\affiliation{
  \institution{Inria, Université Paris-Saclay}
  \city{Palaiseau}
  \country{France}
}
\email{theo.ternier@inria.fr}

\begin{abstract}
  We introduce \emph{monomial divisibility diagrams} (MDDs), a data structure for monomial ideals that supports insertion of new generators and fast membership tests.
  MDDs stem from a canonical tree representation by maximally sharing equal subtrees, yielding a directed acyclic graph.
  We establish basic complexity bounds for membership and insertion, and study empirically the size of MDDs.
  As an application, we integrate MDDs into the signature Gröbner basis implementation of the Julia package \emph{AlgebraicSolving.jl}.
  Membership tests in monomial ideals are used to detect some reductions to zero, and the use of MDDs leads to substantial speed-ups compared to the existing representation by lists of generators with divmasks.
\end{abstract}

\keywords{monomial ideals, Gröbner bases, signatures, symbolic computation, data structures}

\ccsdesc[500]{Computing methodologies~Algebraic algorithms}
\ccsdesc[500]{Theory of computation~Data structures design and analysis}
\ccsdesc[300]{Mathematics of computing~Computations on polynomials}

\maketitle

\section{Introduction}

A Gröbner basis computation breaks down into two kinds of work:
operations on monomials (multiplications, comparisons, divisibility tests),
and arithmetic operations in the ground field. The latter occur when reducing
new polynomials by previously computed relations.
A polynomial that reduces to zero does not contribute to the output.
The family of algorithms stemming from the \Ffive algorithm \parencite{Faugere_2002,EderRoune_2013,EderFaugere_2017,GaoVolnyWang_2016,Lairez_2024} relies on \emph{signatures}, monomials attached to polynomials, to detect many reductions to zero before they happen.
This typically decreases the number of arithmetic operations, at the cost of more monomial operations.

As it turns out, one source of reductions to zero in Gröbner basis computations is the trivial commutator relation $fg - gf = 0$.
Reductions to zero arising from this relation can be avoided using signatures by what was initially called the \Ffive criterion, whose verification requires checking whether a given signature lies in a monomial ideal.
Applying this criterion may yield a substantial
number of monomial ideal membership tests.  For
example, computing a Gröbner basis of the benchmark ideal
\emph{eco-14} with the \Ffive implementation provided by
\emph{AlgebraicSolving.jl} yields 117\,861\,802 membership tests.  In
this example, membership queries not only dominate the symbolic
workload but also account for a significant part of the whole
computation, unlike in non-signature Gröbner basis algorithms,
such as \Ffour \cite{faugere1999}, where the workload
typically stems mostly from arithmetic operations.  This raises the
question: can one support both efficient insertion and fast membership
tests for monomial ideals?

The most obvious data structure for a monomial ideal is a list of generators.
An ideal $I \subseteq \K[x_1,\dots,x_n]$ is stored as $(m_1,\dots,m_r)$, with monomials represented by exponent vectors in~$\N^n$.
Insertion amounts to appending to the list, and thus takes time~$O(1)$, ignoring maintenance of minimality.
A membership query for a monomial $q$ tests whether $m_i \mid q$ for each generator $m_i$.
Assuming exponent comparisons cost $O(1)$ per variable, each divisibility test costs $O(n)$, so one query costs $O(rn)$.

With a different space--time tradeoff, we may represent a monomial ideal~$I\subseteq \K[x_1,\dotsc,x_n]$ generated in partial degrees at most~$D$ by an~$n$-dimensional boolean array with $(D+1)^n$ entries, indexed by tuples $(i_1,\dots,i_n)$ 
with $0 \le i_k \le D$, such that~$A[i_1]\dotsb[i_n]$ is \emph{true} if and only if~$x_1^{i_1} \dotsb x_n^{i_n}$ is in~$I$.
This representation supports membership queries in time~$O(n)$: given a monomial $q$ with exponent vector $(i_1,\dotsc,i_n)$, we compute $j_k = \min(i_k, D)$ and read~$A[j_1]\dotsb[j_n]$.
However, inserting a new generator can be costly: to preserve the ideal property, all entries corresponding to multiples of~$q$ must be set to \emph{true}, which costs~$\Theta(D^n)$ in the worst case.
The space consumption also makes this naive representation impractical beyond small examples.

\paragraph*{Contribution}
We propose \emph{monomial divisibility diagrams (MDDs)} as a data structure for monomial ideals.
An MDD for an ideal~$I$ is a compact encoding of the corresponding $n$-dimensional array~$A$ considered above as a directed acyclic graph.
If the partial degrees of the generators are bounded by~$D$, membership in a monomial ideal can be decided in time~$O(n \log D)$ using an MDD, independently of the number of generators.
MDDs are cousins of \emph{binary decision diagrams (BDDs)}, which represent boolean functions~$\left\{ 0, 1\right\}^n \to \left\{ 0,1 \right\}$ \parencite{Knuth_2011}.
We implemented MDDs in \emph{AlgebraicSolving.jl} and obtained significant speed-ups over the previous divmask implementation.

\paragraph*{Previous work}
Divisibility masks (\emph{divmasks}) are compact bit patterns attached to generators that
discard many potential divisibility tests cheaply \parencite[e.g.][]{RouneStillman_2012}.
They are used, for example, in \emph{AlgebraicSolving.jl} and \emph{msolve} \parencite{BerthomieuEderSafeyElDin_2021}.
The masks need to be recomputed periodically as the ideal evolves.
Divmasks improve the practical speed of the naive algorithm for ideal membership, but the complexity is still linear in the number of generators.

Several tree data structures for monomial ideals have been proposed to speed up membership tests and related operations.
\Textcite{Milowski_2004} introduced \emph{monomial trees}, which are prefix trees built on exponent vectors of generators.
These trees can accelerate membership tests when many monomials share similar exponent patterns, because parts of the tree can be skipped, but may also degenerate, for example when exponents in the first variable are all distinct.

There have been different definitions of \emph{Janet trees} \parencite{GerdtBlinkovYanovich_2001,HashemiOrthSeiler_2022}
for representing Janet bases of monomial ideals by prefix trees of exponent vectors, providing an efficient membership test \parencite[Theorem~1]{GerdtBlinkovYanovich_2001}.
The generalization of Janet trees to minimal Janet-like bases \parencite{GerdtBlinkov_2005,HashemiOrthSeiler_2023} is essentially equivalent to our definition of ideal tree (Section~\ref{sec:ideal-trees}).
With respect to this line of work, our contribution is twofold:
first, a theoretical framework independent of the notion of involutive division;
second, a systematic use of maximal sharing, which makes the data structure usable in cases where a Janet basis would not even fit on a hard disk.

\textcite{GaoZhu_2008} also use prefix trees of exponent vectors to compute irreducible decompositions of monomial ideals.
Their \emph{MinMerge step} can be interpreted as an on-the-fly computation of a Janet-like basis.

\emph{Support trees}, as implemented by \textcite{Malkin_2007} in \emph{4ti2}, provide another tree representation.
At each node, the tree branches according to whether a given variable divides the monomial, and monomials are stored in the leaves.
This has proved efficient in the context of toric ideals, but the data structure degenerates
when the supports of the exponent vectors are too similar.

\textcite{RouneStillman_2012,RouneStillman_2012a} propose the use of \emph{kd-trees} for monomial ideals.
Their experiments show clear speed-ups over simple lists of generators and over monomial trees on a range of benchmarks.
However, kd-trees can become unbalanced after many insertions, so they must be rebuilt periodically, and finding good splitting criteria for a given distribution of monomials is not straightforward. The worst-case complexity for membership is linear in the number of generators, even when the tree is balanced.

Lastly, membership testing in a monomial ideal is a special case of \emph{range searching} (see \cite{Agarwal_2017} for a review, and \cite{Dube_1993} for membership testing).
The \emph{range tree} data structure \parencite{Bentley_1979} can store monomials as points in~$\mathbb{R}^n$. It can be constructed in~$O(r (\log r)^{n-1})$ operations and answers membership queries in~$O( (\log r)^n )$ operations.
There is a whole line of research on different space-time tradeoffs.
MDDs instead favor query time: their worst-case space complexity is exponential, but they remain compact in many applications considered here. This approach is similar to that of binary decision diagrams, which have found many applications in spite of their exponential worst-case space complexity.

\section{Monomial ideals as trees}

\subsection{Monomial ideals}

A monomial ideal in~$n$ variables
is an ideal~$I\subseteq\mathbb{Q}[x_1,\dotsc,x_n]$ generated by monomials.
It is well known that a polynomial~$f$ is in~$I$ if and only if every monomial in~$f$
is divisible by one of the monomials generating~$I$ \parencite[\S2.4]{CoxLittleOShea_2015}.
In particular, a monomial ideal is entirely determined by the set of monomials contained in it.
Identifying a monomial~$x_1^{\alpha_1} \dotsm x_n^{\alpha_n}$ with its exponent vector~$(\alpha_1,\dotsc,\alpha_n) \in \mathbb{N}^n$,
we henceforth view a monomial ideal as a subset~$I\subseteq\mathbb{N}^n$ satisfying
$I + \mathbb{N}^n = I$.
For~$E\subseteq \mathbb{N}^n$ we write $\langle E\rangle = E + \mathbb{N}^n$ for the ideal generated by~$E$.
By Gordan-Dickson's Lemma \parencite[Theorem 2.5]{CoxLittleOShea_2015}, every monomial ideal admits a finite generating set.

In~$\mathbb{N}^0$ there are exactly two monomial ideals, namely $\varnothing$ and~$\mathbb{N}^0$.
For~$n > 0$, we can describe the monomial ideals in~$\mathbb{N}^n$ in terms of monomial ideals in~$\mathbb{N}^{n-1}$.
Let $I\subseteq \mathbb{N}^n$ be a monomial ideal, and let~$e \geq 0$.
We define the \emph{$e$-th quotient ideal of~$I$} as
\[ I/e = \left\{ \alpha \in \mathbb{N}^{n-1} \st \alpha \cdot e \in I \right\}, \]
where~$\alpha \cdot e$ denotes the concatenation~$(\alpha_1,\dotsc,\alpha_{n-1},e)$.
In terms of ideals in~$\mathbb{Q}[x_1,\dotsc,x_n]$,
the definition is equivalent to
\[ I/e = \left( I : x_n^e \right) \cap \mathbb{Q}[x_1,\dotsc,x_{n-1}]. \]
Note that~$I$ is entirely determined by the quotients~$I/e$ since
\begin{equation}
  \label{eq:2}
  I = \bigcup_{e\geq 0}\,(I/e)\cdot e.
\end{equation}


\subsection{Ideal trees}
\label{sec:ideal-trees}

Given a set~$\mathcal{T}$, let~$\mathbb{N}\pmap \mathcal{T}$ be the set of partial functions from~$\mathbb{N}$ to~$\mathcal{T}$ with finite domain.
The domain of a partial function~$t$ is denoted~$\dom(t)$. For~$e \in \dom(t)$, we write~$t(e)$ for the value of the partial function~$t$ at~$e$.

\begin{definition}
  Let~$\bot$ be a set that is not a partial function from~$\mathbb{N}$ to any other set.
  The set~$\mathcal{T}_n$ of $n$-trees is defined recursively as
  \begin{itemize}
    \item $\mathcal{T}_0 = \left\{ \bot \right\}$,
    \item $\mathcal{T}_n = \left\{ t : \mathbb{N} \pmap \mathcal{T}_{n-1} \st \dom(t) \neq \varnothing \right\}$, for~$n > 0$.
  \end{itemize}
\end{definition}
In other words, an $n$-tree is a finite tree,
in which all leaves lie at depth exactly~$n$.
Moreover, the edges are labeled by natural integers.
There is a one-to-one correspondence between~$n$-trees and non\-empty finite subsets of~$\mathbb{N}^n$ by defining
$\mop{set}(\bot) = \mathbb{N}^0$ and
\[ \mop{set}(t)
  = \bigcup_{e\in\dom(t)} \mop{set}(t(e))\cdot e. \]
Equivalently, $\mop{set}(t)$ corresponds to the set of paths in~$t$, from the root to a leaf.
This is the classical construction of \emph{prefix trees} (except that we read the words backward). This yields an interesting data structure to represent finite subsets of~$\mathbb{N}^n$.
In this section, we develop an analogous construction for monomial ideals.

To this end, we define the \emph{ideal represented by an $n$-tree~$t$} as
the monomial ideal generated by~$\mop{set}(t)$, namely
$\ideal(t) = \langle \mop{set}(t) \rangle$.
Equivalently, we have, for~$n=0$, $\ideal(\bot) = \mathbb{N}^0$, and, for~$n > 0$,
\begin{equation}
  \label{eq:1}
  \ideal(t) = \bigcup_{e\in \dom(t)}\bigcup_{k\geq 0}\:\:\ideal(t(e))\cdot (e+k).
\end{equation}

\begin{definition}
  An $n$-tree is \emph{ideal} if either~$n = 0$, or~$n> 0$ and the following conditions hold:
  \begin{itemize}
    \item for all~$e \in \dom(t)$, $t(e)$ is ideal;
    \item for all~$e, f\in \dom(t)$,
          $e < f \implies \ideal(t(e)) \subset \ideal(t(f))$,
          where~$\subset$ denotes strict inclusion.
  \end{itemize}
\end{definition}

\begin{figure}[tb]
  \centering

  \includegraphics[height=3cm, trim={1cm 1.5cm 1cm 1cm}]{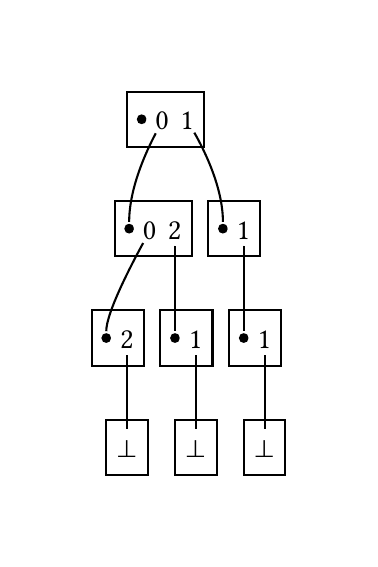}\quad\includegraphics[height=3cm, trim={1cm 1.5cm 1cm 1cm}]{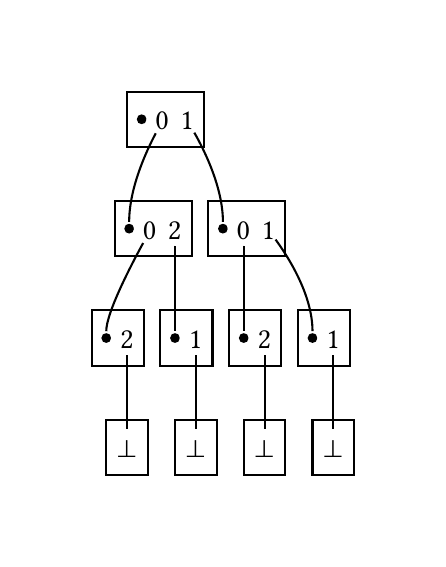}
  
  \caption{Two trees representing the monomial ideal  $\langle xyz, x^2, xy^2 \rangle$. The first level corresponds to the variable~$z$, the second to~$y$, and the last to~$x$.
    The left tree is not an ideal tree because~$\langle x^2, xy^2\rangle$ is not included in~$\langle xy \rangle$.
    The right tree is an ideal tree. Note the extra path corresponding to the monomial~$zx^2$.}
  \label{fig:small}
\end{figure}

Figure~\ref{fig:small} shows an example.

An ideal~$n$-tree~$t$ provides an efficient data structure for deciding membership in~$\ideal(t)$.
Indeed, consider the case~$n > 0$ and write~$\alpha = \alpha'\cdot e \in \mathbb{N}^n$.
By definition, $\alpha \in \ideal(t)$ if and only if~$\alpha' \in \ideal(t(f))$ for some~$f \in \dom(t)$ with $f \leq e$.
For a general~$n$-tree this would require checking all such~$f$.
When~$t$ is ideal, the ideals~$\ideal(t(f))$ form a strictly increasing chain, so it suffices to test
$\alpha' \in \ideal\bigl(t(f)\bigr)$ for the largest~$f$ not exceeding~$e$.
This leads to the procedure~\fn{contains} (Algorithm~\ref{algo:contains}).
The following lemma summarizes the argument.

\begin{lemma}\label{lem:quotient}
  Let~$n > 0$ and let~$t$ be an ideal~$n$-tree.
  For any~$e \geq 0$:
  \begin{enumerate}[label=(\roman*)]
    \item\label{item:empty-below-min} If~$e < \min(\dom(t))$, then~$\ideal(t)/e = \varnothing$.
    \item\label{item:quotient} Otherwise,
          $\ideal(t)/e = \ideal\bigl(t(f)\bigr)$,
          where~$f$ is the largest element of~$\dom(t)$ which does not exceed~$e$.
  \end{enumerate}
\end{lemma}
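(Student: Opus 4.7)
The plan is to unfold the definition of the quotient ideal and the formula \eqref{eq:1} for $\ideal(t)$, then exploit the chain property that characterises ideal trees. Throughout, write $\alpha = e\cdot \alpha'$ for $\alpha' \in \mathbb{N}^{n-1}$. By definition of the quotient and of $\ideal(t)$,
\begin{equation*}
  \alpha' \in \ideal(t)/e \iff e\cdot \alpha' \in \bigcup_{g \in \dom(t)} (g+\mathbb{N})\cdot \ideal(t(g)),
\end{equation*}
which, after reading off the first coordinate, is equivalent to the existence of $g \in \dom(t)$ with $g \leq e$ and $\alpha' \in \ideal(t(g))$.

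For part (i), if $e < \min(\dom(t))$, then no $g \in \dom(t)$ satisfies $g \leq e$, so the condition above fails for every $\alpha'$, hence $\ideal(t)/e = \varnothing$.

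For part (ii), assume $e \geq \min(\dom(t))$ and let $f$ be the largest element of $\dom(t)$ not exceeding $e$; such an $f$ exists because $\dom(t)$ is finite and contains at least one element $\leq e$. The previous equivalence rewrites as
\begin{equation*}
  \alpha' \in \ideal(t)/e \iff \alpha' \in \bigcup_{\substack{g \in \dom(t)\\ g \leq f}} \ideal(t(g)).
\end{equation*}
By the ideal tree property, the ideals $\ideal(t(g))$ for $g \in \dom(t)$ form a strictly increasing chain with respect to $g$, so $\ideal(t(g)) \subseteq \ideal(t(f))$ for every $g \in \dom(t)$ with $g \leq f$, with equality when $g = f$. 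The union above therefore reduces to $\ideal(t(f))$, which proves $\ideal(t)/e = \ideal(t(f))$.

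There is no real obstacle: the statement is essentially a direct translation of the formula \eqref{eq:1} once one invokes the defining chain condition to collapse a union of nested ideals to its largest member. The only point to be careful about is the existence of the maximum $f$ in the second case, which follows from the finiteness of $\dom(t)$ built into the definition of $\mathbb{N}\pmap \mathcal{T}_{n-1}$.
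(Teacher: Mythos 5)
Your proof is correct and follows essentially the same route as the paper: unfold equation~\eqref{eq:1} to write $\ideal(t)/e$ as the union of $\ideal(t(g))$ over $g\in\dom(t)$ with $g\leq e$, then use the strict-inclusion chain condition of ideal trees to collapse that union to its largest member (or note it is empty when $e<\min(\dom(t))$). Your version is just slightly more explicit about the element-wise unfolding and the existence of the maximal $f$, but the argument is the same.
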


\begin{proof}
  Equation~\eqref{eq:1} implies
  \[ \ideal(t)/e = \bigcup_{f\in \dom(t),\ f \leq e} \ideal(t(f)). \]
  If~$e < \min(\dom(t))$, then the union is empty
  and the first point follows.
  Otherwise, the defining property of ideal trees shows that
  the union is contained in the term~$\ideal(t(f))$ corresponding to the largest such~$f$, which yields the second point.
\end{proof}

In the same way as there is a unique~$n$-tree representing a given nonempty subset of~$\mathbb{N}^n$,
there is a unique ideal~$n$-tree representing a given nonempty ideal of~$\mathbb{N}^n$.

\begin{theorem}\label{thm:ideal-tree}
  The map~$t \in \mathcal{T}_n \mapsto \ideal(t) \subseteq \mathbb{N}^n$ induces a bijection between
  ideal $n$-trees and nonempty monomial ideals in~$\mathbb{N}^n$.
\end{theorem}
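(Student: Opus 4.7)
The plan is to proceed by induction on $n$. The base case $n=0$ is immediate: $\mathcal{T}_0 = \left\{ \bot \right\}$ (whose sole element is ideal), the only nonempty monomial ideal in $\mathbb{N}^0$ is $\mathbb{N}^0$ itself, and $\ideal(\bot) = \mathbb{N}^0$, so the map is a bijection between two singletons.

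For the inductive step, I fix a nonempty monomial ideal $I \subseteq \mathbb{N}^n$ and first establish the key structural observation that the sequence $(I/e)_{e \geq 0}$ is weakly increasing: whenever $\alpha \in I/e$, i.e.\ $e\cdot\alpha \in I$, ideal-closedness gives $(e+1)\cdot\alpha \in I$ and hence $\alpha \in I/(e+1)$. Dickson's lemma applied in $\mathbb{N}^{n-1}$ ensures this chain stabilizes after finitely many strict increases, so the set
\[ E = \left\{ e \in \mathbb{N} \st I/e \neq \varnothing \text{ and } I/(e-1) \subsetneq I/e \right\} \]
(with the convention $I/(-1) := \varnothing$) is finite, and nonempty because $I \neq \varnothing$ forces some $I/e$ to be nonempty and the minimum such $e$ lies in $E$. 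I then define $t$ by $\dom(t) = E$ and, for $e \in E$, set $t(e)$ to be the unique ideal $(n-1)$-tree representing $I/e$, which exists by the inductive hypothesis. The defining conditions of an ideal tree hold by construction. To verify $\ideal(t) = I$, I invoke equation~\eqref{eq:2} and observe that indices $e \notin E$ with $I/e \neq \varnothing$ contribute nothing new, since $I/e = I/e'$ for the largest $e' \in E$ with $e' \leq e$, whence $(e+\mathbb{N}) \cdot (I/e) \subseteq (e'+\mathbb{N}) \cdot (I/e')$.

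For uniqueness, suppose $t$ is any ideal $n$-tree with $\ideal(t) = I$. Lemma~\ref{lem:quotient} identifies $I/e$ with $\ideal(t(f))$ for the largest $f \in \dom(t)$ with $f \leq e$, or with $\varnothing$ if none exists. A direct case analysis then shows $\dom(t) = E$: if $e \in \dom(t)$, the maximal $f \leq e-1$ in $\dom(t)$ satisfies $\ideal(t(f)) \subsetneq \ideal(t(e))$ by the strict inclusion property (or is absent), so $I/(e-1) \subsetneq I/e$ and $e \in E$; conversely, if $e \notin \dom(t)$, the same maximal element of $\dom(t)$ serves for both $e-1$ and $e$, giving $I/(e-1) = I/e$ and $e \notin E$. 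With $\dom(t)$ pinned down, the inductive hypothesis forces each $t(e)$ to be the unique ideal $(n-1)$-tree representing $I/e$.

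The main obstacle is ensuring finiteness of $\dom(t)$, which is precisely where the Noetherianity of monomial ideals (Dickson's lemma) enters; without it the construction above would produce an infinite partial function. Everything else amounts to carefully matching the recursive descriptions of $I$ in equation~\eqref{eq:2} and of $\ideal(t)$ in equation~\eqref{eq:1}, and leveraging Lemma~\ref{lem:quotient} to tie each $e \in \dom(t)$ to a specific quotient ideal.
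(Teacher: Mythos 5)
Your proof is correct and takes essentially the same route as the paper: induction on $n$, constructing the tree from the stabilizing chain of quotient ideals $I/e$ (finiteness via Dickson's lemma) and matching quotients through Lemma~\ref{lem:quotient} and equation~\eqref{eq:2}. The only difference is cosmetic: for uniqueness you characterize $\dom(t)$ directly as the set $E$ of strict-increase indices, whereas the paper argues by contradiction from a minimal element of the symmetric difference of the two domains; both hinge on the same strict-monotonicity property of ideal trees.
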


\begin{proof}
  We argue by induction on~$n$. For~$n=0$, the claim is immediate because there is exactly one ideal~$0$-tree, namely~$\bot$, and exactly one nonempty ideal in~$\mathbb{N}^0$, namely~$\mathbb{N}^0$.

  Assume $n>0$ and let $I \subseteq \mathbb{N}^n$ be a nonempty monomial ideal.
  By the ascending chain condition the sequence
  $I/0 \subseteq  I/1 \subseteq I/2 \subseteq \dotsb$
  eventually stabilizes.
  Since~$I$ is nonempty, the stabilized value is nonempty as well.
  Choose indices $0 \leq f_1 < \dotsb < f_r$ (and put $f_{r+1} = \infty$) such that
  \begin{enumerate}[label=(\roman*)]
    \item \label{item:efirst} $I/e = \varnothing$ for all~$0\leq e < f_1$,
    \item \label{item:ecorrect} $I/e = I/{f_i}$ whenever~$f_i \leq e < f_{i+1}$, and
    \item \label{item:emin} $I/{f_i} \neq I/{f_{i+1}}$ for all~$1\leq i < r$.
  \end{enumerate}
  By the induction hypothesis, each $I/{f_i}$ is represented by a unique ideal $(n-1)$-tree. Define $t$ by $\dom(t) = \{f_1,\dotsc,f_r\}$ and $t(f_i)$ is the ideal tree representing~$I/f_i$.
  The strict inclusions among the $I/{f_i}$, by~\ref{item:emin}, ensure that $t$ is an ideal $n$-tree.
  Lemma~\ref{lem:quotient} yields $\ideal(t)/e = I/e$ for every $e \geq 0$, hence $\ideal(t) = I$, by~\eqref{eq:2}. This proves surjectivity.

  For injectivity, let $s$ be another ideal $n$-tree with $I = \ideal(s) = \ideal(t)$. Suppose there exists $e \in \dom(s) \setminus \dom(t)$ and choose such an $e$ minimal. If $e < f_1$, then $\ideal(s)/e = \varnothing$, by~\ref{item:efirst}, contradicting $\ideal(s(e)) \neq \varnothing$. Otherwise $f_i < e < f_{i+1}$ for a unique $i$. Minimality of~$e$ implies~$f_i \in \dom(s)$.
  Lemma~\ref{lem:quotient} and \ref{item:ecorrect} give
  $\ideal(s(e)) = I/e = I/f_i = \ideal(s(f_i))$.
  This contradicts the strict monotonicity of~$s$. Hence $\dom(s) \subseteq \dom(t)$, and the reverse inclusion follows from similar arguments.

  Lemma~\ref{lem:quotient} gives $\ideal(s(e)) = \ideal(t(e))$ for every $e \in \dom(t) = \dom(s)$. The induction hypothesis then yields $s(e) = t(e)$, so~$s = t$.
\end{proof}

\begin{algorithm}[tp]
  \caption{Membership in an ideal tree}
  \label{algo:contains}
  \raggedright
  \begin{inputoutput}
    \item[input] an ideal $n$-tree $t$, with~$n>0$, and~$e \in \mathbb{N}$
    \item[output] either an ideal $(n-1)$-tree $s$ such that~$\ideal(t)/e = \ideal(s)$, \\or~$\varnothing$ if~$\ideal(t)/e = \varnothing$.
  \end{inputoutput}
  \begin{pseudo}
def \fn{quo}(t, e):\\+
  if $\min(\dom(t)) > e$:\\+
    return $\varnothing$\\-
  $f\gets $ \tn{largest element of $\dom(t)$ that does not exceed~$e$}\\
  return $t(f)$\\-
\end{pseudo}

\bigskip
\begin{inputoutput}
  \item[input] an ideal~$n$-tree~$t$, and $\alpha \in \mathbb{N}^n$
  \item[output] \kw{True} if~$\alpha \in \ideal(t)$, \kw{False} otherwise
\end{inputoutput}
\begin{pseudo}
def \fn{contains}(t, \alpha):\\+
  if $n = 0$:\\+
    return True\\-
  $s\gets \fn{quo}(t, \alpha_n)$\\
  if $s$ = $\varnothing$:\\+
    return False\\-
  return \fn{contains}(s, (\alpha_1,\dotsc,\alpha_{n-1}))
  \end{pseudo}
\end{algorithm}

A nonempty monomial ideal~$I\subseteq \mathbb{N}^n$ now comes with two distinguished generating sets.
The first is the familiar \emph{minimal} generating set.
The second is~$\mop{set}(t)$, where~$t$ is the unique ideal tree representing~$I$.
This generating set has appeared before as the \emph{minimal Janet-like basis of~$I$} in the context of involutive divisions and related algorithms for computing Gröbner bases \parencite{GerdtBlinkov_2005}.
The notion of ideal trees provides an equivalent but conceptually simpler description, independent of involutive divisions.

\subsection{Insertion in an ideal tree}
\label{sec:insertion-an-ideal}

\begin{figure}[p]
  \centering
    \tikzstyle{every node}=[font=\small]

  \rlap{before insertion}\begin{tikzpicture}[scale=0.6, x=.5cm, y=0.7cm, out=-90, in=90, looseness=0.6]
    \path[anchor=base]  (-3, 0) node (bullet) {$\bullet$} -- ++(1, 0) node (a) {$a$}  -- ++(1, 0) node (b) {$b$} -- ++(1, 0) node (c) {$c$}  -- ++(1,-.3) node (leftb) {};
    \draw (bullet.west) + (-.5, .5) rectangle (leftb);

    \node (Ia) at (-6, -2) {$t(a)$}; 
    \draw (a) edge (Ia) ;
    \node (Ib) at (-1, -2) {$t(b)$}; 
    \draw (b) edge (Ib) ;
    \node (Ic) at (4, -2) {$t(c)$}; 
    \draw (c) edge (Ic) ;
  \end{tikzpicture}

  \bigskip
  \rlap{case $x < a$}
  \begin{tikzpicture}[scale=0.6, x=.5cm, y=0.8cm, out=-90, in=90, looseness=0.6]
    \path[anchor=base]  (-5, 0) node (bullet) {$\bullet$} -- ++(1, 0) node (x) {$x$} -- ++(1, 0) node (a) {$a$}  -- ++(1, 0) node (b) {$b$} -- ++(1, 0) node (c) {$c$}  -- ++(1,-.3) node (leftb) {};
    \draw (bullet.west) + (-.5, .5) rectangle (leftb);

    \node (Ix) at (-13, -2) {$\fn{singleton}(\alpha')$}; 
    \draw (x) edge[out=-120] (Ix) ;
    \node (Ia) at (-6, -2) {$\fn{insert}(t(a), \alpha')$}; 
    \draw (a) edge (Ia) ;
    \node (Ib) at (1, -2) {$\fn{insert}(t(b), \alpha')$}; 
    \draw (b) edge (Ib) ;
    \node (Ic) at (8, -2) {$\fn{insert}(t(c), \alpha')$}; 
    \draw (c) edge[out=-60] (Ic) ;
  \end{tikzpicture}

    \bigskip
  \rlap{case $a < x < b$}
  \begin{tikzpicture}[scale=0.6, x=.5cm, y=0.8cm, out=-90, in=90, looseness=0.6]
    \path[anchor=base]  (-5, 0) node (bullet) {$\bullet$}  -- ++(1, 0) node (a) {$a$} -- ++(1, 0) node (x) {$x$} -- ++(1, 0) node (b) {$b$} -- ++(1, 0) node (c) {$c$}  -- ++(1,-.3) node (leftb) {};
    \draw (bullet.west) + (-.5, .5) rectangle (leftb);

    \node (Ia) at (-13, -2) {$t(a)$}; 
    \draw (a) edge[out=-120] (Ia) ;
    \node (Ix) at (-6, -2) {$\fn{insert}(t(a), \alpha')$}; 
    \draw (x) edge (Ix) ;
    \node (Ib) at (1, -2) {$\fn{insert}(t(b), \alpha')$}; 
    \draw (b) edge (Ib) ;
    \node (Ic) at (8, -2) {$\fn{insert}(t(c), \alpha')$}; 
    \draw (c) edge[out=-60] (Ic) ;
  \end{tikzpicture}

  \bigskip
  \rlap{case $x = b$}
  \begin{tikzpicture}[scale=0.6, x=.5cm, y=0.8cm, out=-90, in=90, looseness=0.6]
    \path[anchor=base]  (-3, 0) node (bullet) {$\bullet$} -- ++(1, 0) node (a) {$a$}  -- ++(1, 0) node (b) {$b$} -- ++(1, 0) node (c) {$c$}  -- ++(1,-.3) node (leftb) {};
    \draw (bullet.west) + (-.5, .5) rectangle (leftb);

    \node (Ia) at (-6, -2) {$t(a)$}; 
    \draw (a) edge (Ia) ;
    \node (Ib) at (-1, -2) {$\fn{insert}(t(b), \alpha')$}; 
    \draw (b) edge (Ib) ;
    \node (Ic) at (6, -2) {$\fn{insert}(t(c), \alpha')$}; 
    \draw (c) edge (Ic) ;
  \end{tikzpicture}

  \caption{\normalfont Insertion of a monomial~$\alpha = \alpha'\cdot x$ in an ideal tree~$t$ with~$\dom(t) = \left\{ a, b, c \right\}$, with~$a < b<c$. Pruning of redundant subtrees after insertion is not shown.}
  \bigskip
  \label{fig:insertion}
\end{figure}
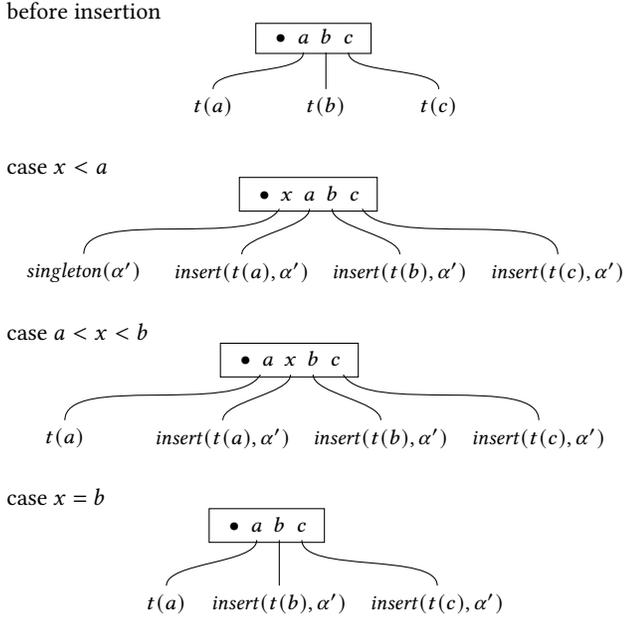

\begin{algorithm}[p]
  \caption{Insertion in an ideal tree.}
  \label{algo:insertion}
  \raggedright
  \begin{inputoutput}
    \item[input] $\alpha \in \mathbb{N}^n$
    \item[output] the ideal tree representing $\langle \alpha \rangle$
  \end{inputoutput}
  \begin{pseudo}
def \fn{singleton}(\alpha):\\+
  if $\fn{len}(\alpha) = 0$: 
    return $\{ \}$\\
  $\alpha'\cdot x \gets \alpha$\\
  $t \gets \left\{ x \mapsto \fn{singleton}(\alpha') \right\}$\\
  return $t$\\-
  \end{pseudo}
  \begin{inputoutput}
    \item[input] an ideal tree~$t$ (or $\varnothing$), and~$\alpha \in \mathbb{N}^n$
    \item[output] the ideal tree representing~$\ideal(t) + \langle \alpha\rangle$
  \end{inputoutput}
  \begin{pseudo}
def \fn{insert}(t, \alpha):\\+
  if $t = \varnothing$:
    return \fn{singleton}(\alpha)\\
  if $\fn{len}(\alpha) = 0$:
    return $\{ \}$\\
  $s\gets \{ \}$\\
  $\alpha'\cdot x \gets \alpha$\\
  for $e\in \dom(t)$:\\+\
    if $e < x$:\\+
      $s(e) \gets t(e)$\\-
    elif $e > x$:\\+
      $s(e) \gets \fn{insert}(t(e), \alpha')$\\--
  $s(x) \gets \fn{insert}(\fn{quo}(t, x), \alpha')$\\
  \tn{remove any~$f \in \dom(s)$ s.t.~$s(f) = s(e)$ for some~$e < f$}\\
  return $s$\\-
  \end{pseudo}
\end{algorithm}

For a monomial~$\alpha = (\alpha_1, \dotsc, \alpha_n) \in \mathbb{N}^n$, let~$\fn{singleton}(\alpha)$ denote the ideal tree representing the ideal~$\langle\alpha\rangle$, whose existence and uniqueness is guaranteed by Theorem~\ref{thm:ideal-tree}.
This is the tree\\
\begin{center}
\begin{tikzpicture}
  \node[treenode, label={\small root}] (root) {};
  \draw (root) -- node[above] {$\alpha_n$} ++(1, 0) node[treenode] {} -- node[above] {$\alpha_{n-1}$} ++(1, 0) node[treenode] (a) {} ;
  \draw (a) -- ++(.5, 0) node (a1) {};
  \draw[dotted] (a1) -- ++(1, 0) node (a2) {};
  \draw (a2) -- ++(.5, 0) node[treenode] {} -- node[above] {$\alpha_{1}$} ++(1, 0) node[treenode] {} ;
\end{tikzpicture}.
\end{center}

For an ideal~$n$-tree~$t$ and a monomial~$\alpha \in \mathbb{N}^n$, let~$\fn{insert}(t, \alpha)$
denote the unique ideal tree representing the ideal~$\ideal(t) \cup \langle \alpha \rangle$.
The case~$n=0$ is trivial (since there is a single ideal~$0$-tree, namely~$\bot$),
so let us consider the case~$n > 0$ and write~$\alpha = \alpha' \cdot x$.
For any~$e < x$,
\[ \left( \ideal(t) + \langle \alpha \rangle \right)/e = \ideal(t)/e, \]
while for any~$e \geq x$,
\[ \left( \ideal(t) + \langle \alpha \rangle \right)/e = \ideal(t)/e + \langle \alpha' \rangle, \]
Therefore, we consider the tree~$s$
defined by
\begin{enumerate}
  \item $\dom(s) = \dom(t) \cup \left\{ x \right\}$,
  \item $s(e) = t(e)$ for all~$e < x$ in~$\dom(s)$,
  \item $s(e) = \fn{insert}(t(e), \alpha')$ for all~$e > x$ in~$\dom(s)$,
  \item $s(x) = \fn{singleton}(\alpha')$ if~$x < \min(\dom(t))$,
  \item $s(x) = \fn{insert}(t(f), \alpha')$ if~$x \geq \min(\dom(t))$, where~$f$ denotes the largest element of~$\dom(t)$ that does not exceed~$x$.
\end{enumerate}
This is illustrated in Figure~\ref{fig:insertion}.
By the discussion above, $\ideal(s) = \ideal(t) \cup (\alpha + \mathbb{N}^n)$, however~$s$ may not be ideal:
the tree~$s$ satisfies~$\ideal(s(e)) \subseteq \ideal(s(f))$ for any~$e, f\in \dom(s)$ with~$e < f$,
but the inclusion may be an equality.
So we remove from~$\dom(s)$ all such elements~$f$, which do not change~$\ideal(s)$, and we obtain the description of~\fn{insert}(t, \alpha).
Algorithm~\ref{algo:insertion} summarizes the construction.

\section{Monomial divisibility diagrams}

\begin{figure*}[htbp]
  \centering

  \includegraphics[width=\linewidth,  trim={1cm 1.5cm 1cm 1cm}]{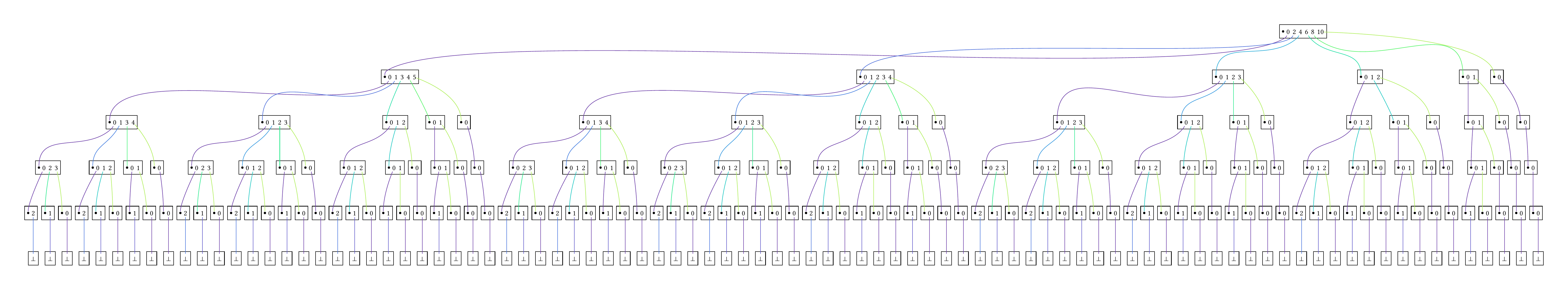}
  \caption{The ideal tree of the monomial ideal generated in~$\mathbb{N}^5$ by the leading monomials of the 55 elements of the Gröbner basis of five generic equations of degree~3, 3, 3, 3, and~2.
  }
  \label{fig:ideal-tree}
\end{figure*}

\begin{figure}[h]
  \centering

  \includegraphics[height=7cm, trim={1cm 1.5cm 1cm 1cm}]{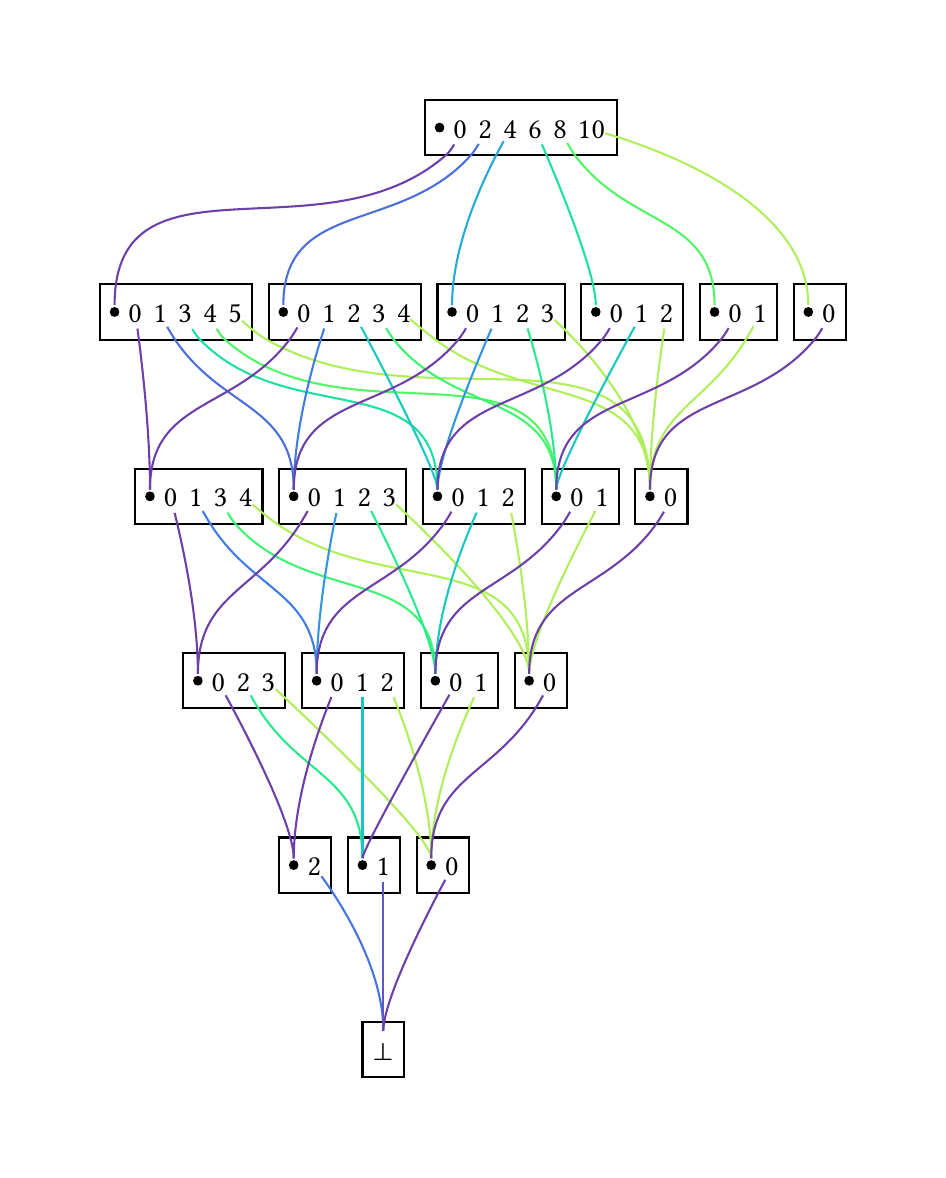}
  \caption{\normalfont The monomial divisibility diagram of the ideal tree of Figure~\ref{fig:ideal-tree}.}
  \label{fig:mdd}
\end{figure}

Direct implementations of ideal trees may face a crushing growth in the number of nodes,
hindering the use of the data structure for practical purposes.
In many cases this growth is only apparent, because the same subtrees occur repeatedly in the ideal tree.
By sharing those common subtrees we obtain a representation of the tree as a directed acyclic graph.
We define a \emph{monomial divisibility diagram (MDD)} as the directed acyclic graph obtained from an ideal tree by maximally sharing its common subtrees. Figure~\ref{fig:ideal-tree} illustrates an ideal tree and Figure~\ref{fig:mdd} displays the corresponding MDD. Figure~\ref{fig:eco14} illustrates another MDD.
Mathematically speaking, ideal trees and MDDs encode the same information, but the compactness of MDDs makes them suitable for practical applications.

\subsection{Formal definition}

Let~$I \subseteq \mathbb{N}^n$ be a nonempty monomial ideal
and let~$t$ be its ideal tree.
Let~$S$ be the set of subtrees of~$t$: it is the smallest subset of~$\mathcal{T}_0 \cup \dotsb \cup \mathcal{T}_n$
which contains~$t$ and contains all the children of all its elements.
The \emph{monomial divisibility diagram} of~$I$ is the directed acyclic graph with vertex set~$S$,
with an edge~$(t, t(e))$ labeled~$e$ for any~$t\in S$ and~$e\in \dom(t)$.

The algorithms for insertion and membership are the same as for ideal trees (Algorithms~\ref{algo:contains} and~\ref{algo:insertion})
but they can be implemented more efficiently using memoization. The critical part is to maintain the maximal sharing property: whenever two subtrees are equal, they must be represented by the same vertex of the MDD.

\subsection{Algorithms and implementation}
\label{sec:algor-impl}
The implementation used in our benchmarks is in Julia, but we present the same ideas here in Python, since concrete code describes the data structure and the
implementation of maximal sharing through hash consing more precisely than pseudocode.
\input{minted-arxiv/listing1.pygtex}
\input{minted-arxiv/listing2.pygtex}

An edge is a pair consisting of an integer \texttt{label} and a destination node \texttt{to};
a node is a tuple of edges, sorted by increasing \texttt{label}. 
Below, we implement the membership test using a binary search, although the nodes are typically small enough that it offers no advantage over linear search.
The call \texttt{contains(t, a, i)} tests whether the monomial with exponent vector~\texttt{a[0:i]} is in the ideal represented by~\texttt{t}.

\input{minted-arxiv/listing3.pygtex}

The following procedure creates a singleton tree for the monomial~\texttt{a[0:i]}, as defined in Section~\ref{sec:insertion-an-ideal}.
\input{minted-arxiv/listing4.pygtex}

To create MDDs efficiently, it is important to decide the structural equality of two nodes in constant time.
A well-known technique is \emph{hash consing} \parencite[e.g.][]{Goto_1974,FilliatreConchon_2006}: we maintain the invariant that \emph{two nodes are structurally equal if and only if they are physically equal} (maximal sharing).
For this purpose, we maintain a hash table \texttt{\_universe} that maps tuples of edges to nodes.
When creating a new node from a tuple of edges, we check whether this tuple is already present in \texttt{\_universe}; if so, we return the existing node instead of allocating a new one.
To ensure that unused nodes are reclaimed by the garbage collector, we store weak references in this dictionary.
We therefore redefine the \texttt{\_\_new\_\_} method of the class \texttt{Node} as follows.
\input{minted-arxiv/listing5.pygtex}

This invariant allows us to implement constant-time equality checks and hashing. In Python, \texttt{id(self)} refers to the address in memory of the object~\texttt{self}.
\input{minted-arxiv/listing6.pygtex}

For example, the following \texttt{dedup} procedure takes a list of edges and removes any edge that points to the same node as the previous edge.
\input{minted-arxiv/listing7.pygtex}

Hash consing not only minimizes memory usage; it also enables efficient memoization. Indeed, maximal sharing ensures that equal subtrees are represented by the same node, so each subproblem needs to be treated only once.
For example, to compute the size of the ideal tree corresponding to an MDD, we use the following procedure.
\input{minted-arxiv/listing8.pygtex}

The same principle applies when implementing the insertion of a monomial in an MDD (Algorithm~\ref{algo:insertion}).\footnote{A useful optimization, omitted here, is the following: if inserting a monomial~$m$ into~$t(e)$ leaves~$t(e)$ unchanged, which can be tested in constant time, then~$m\in\ideal(t(e))$ and hence~$m\in\ideal(t(f))$ for all~$f > e$. Thus some recursive calls can be skipped.}
\input{minted-arxiv/listing9.pygtex}

\subsection{Complexity}

To study the complexity of operations on MDDs, we consider the following size parameters.
Given the MDD of an ideal $I\subseteq \mathbb{N}^n$,
its \emph{size} is the number of edges;
its \emph{width} is the maximum number of nodes at a given depth;
and its \emph{branching degree} is the maximum outdegree of a node.
For example, the size of the MDD in Figure~\ref{fig:mdd} is~52, its width is~6, and its branching degree is~6.

Let~$I \subseteq \mathbb{N}^n$ be a nonempty ideal, represented by its MDD of size~$N$, width~$w$, and branching degree~$\delta$.
We easily check that $N \leq n \delta w$, $w \leq \delta^{n-1}$, and $\delta \leq e+1$ for any~$e$ such that~$I$ is generated by monomials with exponents at most~$e$.

\begin{proposition}
  The cost of testing membership of some~$\alpha \in \mathbb{N}^n$ in~$I$ is
  $O(n \log \delta)$.
\end{proposition}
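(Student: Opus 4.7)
The plan is to analyze Algorithm~\ref{algo:contains} directly, bounding the work done at each recursive level and the total recursion depth.

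First I would observe that \fn{contains} makes a single recursive call per invocation, consuming one coordinate of $\alpha$ at each step. Since $\alpha \in \mathbb{N}^n$, the recursion terminates after at most $n+1$ calls (the final call at depth $n$ returns immediately because the ideal $0$-tree case is trivial). So the recursion depth is $O(n)$, independent of the MDD structure.

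Next I would bound the work performed within a single (non-base) call. Apart from the recursive call, the only nontrivial operation is \fn{quo}$(t, \alpha_i)$, whose task is to locate the largest element of $\dom(t)$ not exceeding $\alpha_i$ (or determine that none exists). Since the edges of a node are stored sorted by increasing label (as in the Python implementation), this lookup is a standard predecessor query on a sorted list of length at most $\delta$, and can be performed by binary search in time $O(\log \delta)$. All other per-call operations (comparisons, pointer dereferences, extracting the subtree) are $O(1)$.

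Finally, I would combine the two bounds: at most $n$ recursive calls, each doing $O(\log \delta)$ work, gives total cost $O(n \log \delta)$. Correctness of the algorithm itself is already guaranteed by Lemma~\ref{lem:quotient}, so nothing further is needed on that front.

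The only subtlety worth mentioning explicitly is that the bound $O(\log \delta)$ on the predecessor query assumes that the edge list at each node is kept sorted; this is an invariant maintained by the insertion procedure of Section~\ref{sec:insertion-an-ideal}. I would not expect any real obstacle here, as the argument is essentially a direct reading of the algorithm against the stated definitions of $n$ and $\delta$.
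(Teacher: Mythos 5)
Your argument matches the paper's proof exactly: at most $n$ recursive calls, each performing a binary-search predecessor query on a node's sorted edge list of length at most $\delta$, for a total of $O(n\log\delta)$. The extra remarks on the sortedness invariant and on correctness via Lemma~\ref{lem:quotient} are fine but not needed beyond what the paper states.
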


\begin{proof}
  There are at most~$n$ recursive calls, and each call performs a binary search on an array with at most~$\delta$ elements.
\end{proof}

It is easy to adapt the representation of nodes to obtain an $O(n)$ membership test:
instead of storing a sparse array, we may use a dense array. For example, a node
corresponding to an ideal tree~$t$ with~$\dom(t) = \{ 2, 3, 6 \}$ is represented by an array of pairs
such as $\left[ (2, t(2)), (3, t(3)), (6, t(6)) \right]$,
but we could instead use an array of subtrees, such as
$\left[ \varnothing, \varnothing, t(2), t(3), t(3), t(3), t(6) \right]$.
However, the memory overhead of the dense representation is unbounded.
We therefore advocate for the sparse representation, even though the dense representation may work well in many practical cases.

\begin{proposition}
  The MDD of $I + \langle \alpha \rangle$ can be computed from the MDD of~$I$ with~$O(N)$ operations on average, with a randomized algorithm.
  Moreover, $\operatorname{width}(I + \langle \alpha\rangle) \leq 2 \operatorname{width}(I) + 1$.
\end{proposition}

\begin{proof}
  Algorithm \fn{insert} (implemented with memoization as in \S\ref{sec:algor-impl})
  visits every node at most once, and the work at a node is bounded by the number of outgoing edges (including the deduplication in~\verb|Node.__new__|, on average, implementing hash tables with universal hashing \parencites{CarterWegman_1979}[\S11.3.3]{CormenLeisersonRivestStein_2009}).
  So the total work is bounded by the total number of edges.

  As for the width, we check that every subtree at depth~$k$ of the ideal tree of~$I + \langle \alpha \rangle$
  is one of the following: a subtree at depth~$k$ of the ideal tree of~$I$;
  $\fn{insert}(t, (\alpha_1,\dotsc,\alpha_{n-k}))$, where~$t$ is a subtree at depth~$k$ of~$I$;
  or~$\fn{singleton}(\alpha_{1}\dotsb\alpha_{n-k})$.
\end{proof}

\begin{corollary}\label{coro:exponential-bound}
  If~$I$ is generated by~$r$ monomials, then~$w \leq 2^r - 1$.
\end{corollary}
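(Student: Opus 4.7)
The plan is a straightforward induction on the number of generators $r$, using the preceding proposition's bound $\operatorname{width}(I + \langle \alpha\rangle) \leq 2\operatorname{width}(I) + 1$ as the inductive step.

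For the base case $r = 1$, the ideal $I = \langle m_1 \rangle$ is represented by the MDD produced by $\fn{singleton}(m_1)$, which is a simple path with exactly one node at each depth. Hence $\operatorname{width}(I) = 1 = 2^1 - 1$, matching the bound.

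For the inductive step, write $I = \langle m_1, \dotsc, m_r \rangle = I' + \langle m_r \rangle$ where $I' = \langle m_1, \dotsc, m_{r-1}\rangle$. By the induction hypothesis, $\operatorname{width}(I') \leq 2^{r-1} - 1$. Applying the preceding proposition yields
\[
\operatorname{width}(I) \leq 2\operatorname{width}(I') + 1 \leq 2(2^{r-1} - 1) + 1 = 2^r - 1,
\]
which closes the induction.

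There is no real obstacle here: the corollary is essentially an unrolling of the recurrence $w_{k+1} \leq 2 w_k + 1$ starting from $w_1 = 1$. The only minor point to verify is that the base case saturates the bound so that the recurrence produces $2^r - 1$ rather than something weaker, and that the width of a singleton ideal is indeed $1$ in the sense defined before the proposition.
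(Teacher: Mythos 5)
Your proof is correct and matches the paper's intended argument: the corollary is stated as a direct consequence of the preceding proposition, obtained exactly as you do by inserting the $r$ generators one at a time starting from a singleton tree of width~$1$ and unrolling the recurrence $w \leq 2w+1$.
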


This exponential bound may seem alarming, but it does not capture the sharing observed in practice.
We do observe an exponential blowup in the MDD size in some cases, especially for ideals generated by random monomials, but also much more favorable cases; see~\S\ref{sec:exper-data-size}.
In forthcoming work, we establish the bound~$w \leq r$
for \emph{Borel-fixed} ideals, which typically appear in Gröbner basis computations \parencite{Green_1998}.

\subsection{Experimental data on the size of MDDs}\label{sec:exper-data-size}

\begin{figure}[tp]
  \centering

  \includegraphics[width=\linewidth]{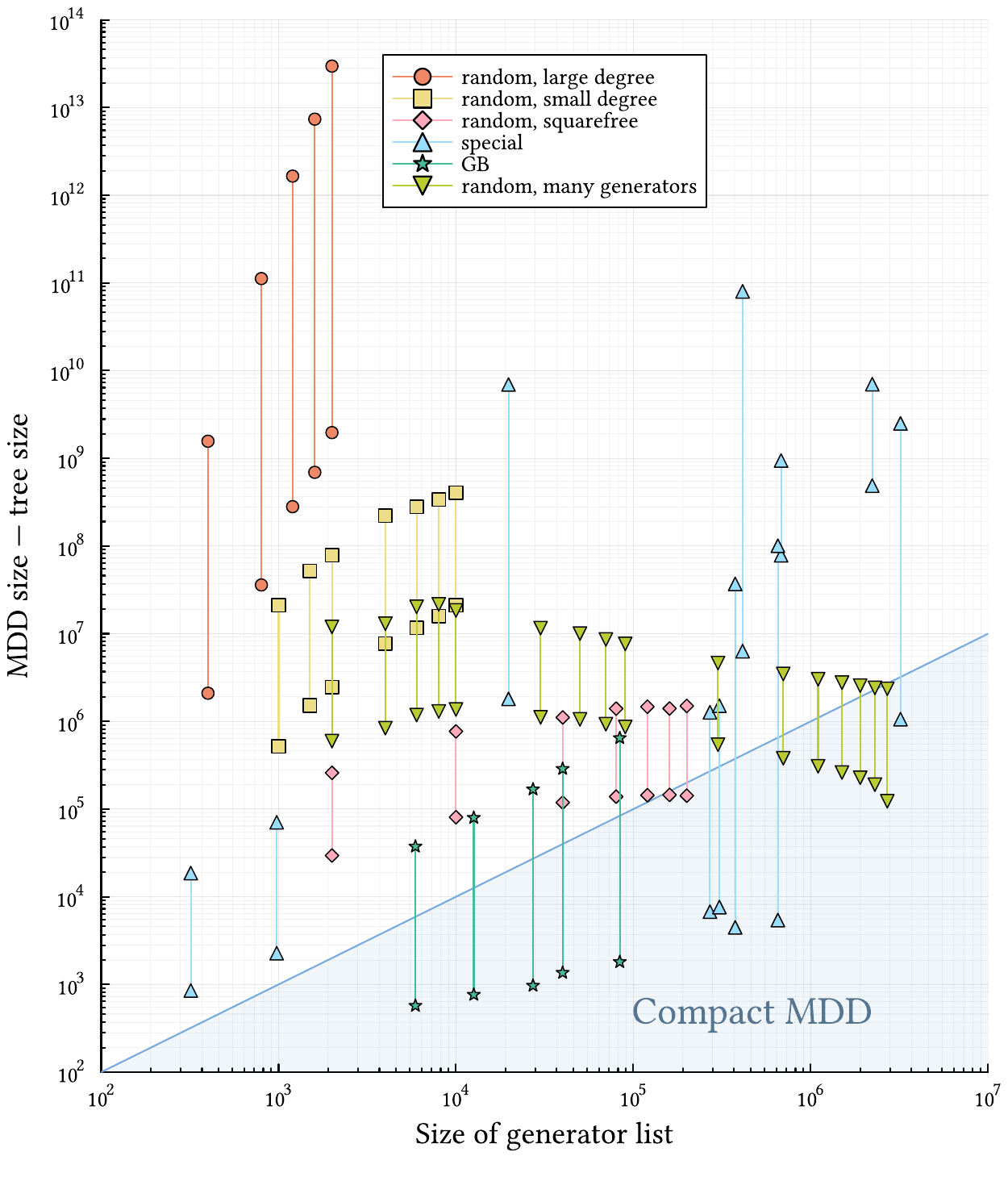}
  
  \caption{Size of the ideal tree and of the MDD across benchmark families.
    Each pair of points corresponds to a monomial ideal. The abscissa of a point is the size of the list of generators, in machine words, that is, the number of generators times the number of variables.
    The ordinate of the lower point of a pair is the size, in machine words, of the MDD in sequential form, that is, twice the number of edges plus the number of nodes.
    The ordinate of the upper point is the size of the ideal tree in the same sequential form.
    The shaded area indicates where the ordinate is smaller than the abscissa.}
  \label{fig:sizeplot}
\end{figure}

It is difficult to quantify \emph{a priori} the benefit of sharing, but in practice the reduction in size is by several orders of magnitude.
As an extreme example, consider the monomial ideal~$I_{n,d} \subseteq \mathbb{N}^n$
generated by all monomials of total degree~$d$.
This ideal is minimally generated by~$\binom{d+n-1}{n-1}$ monomials,
but the width of its MDD is only~$d+1$: the MDD is exponentially more compact than the generator list.

For a practical experiment, we consider six benchmark families of monomial ideals, of which the first four come from the dataset compiled by \textcite[supplementary data]{Roune_2009}:
\begin{description}[font=\normalfont\itshape]
  \item[random, large degree.] Monomial ideals in 10~variables generated by 40 to 200 random monomials; exponents are uniformly distributed in $\left\{ 0,\dotsc,30\,000 \right\}$.
  \item[random, small degree.] Monomial ideals in 10~variables generated by 100 to~1\,000 random monomials; exponents are uniformly distributed in $\left\{ 0,\dotsc,10 \right\}$. 
  \item[random, squarefree.] Monomial ideals in 20~variables generated by 100 to~10\,000 random monomials; exponents are uniformly distributed in $\left\{ 0, 1 \right\}$.
  \item[random, many generators.] Monomial ideals in 10~variables generated by 1000 to~290\,000 random uniformly distributed monomials of total degree exactly 12.
  \item[special.] A collection coming from various problems \parencite[\S 7.1]{Roune_2009}.
  \item[GB.] Ideals generated by leading monomials of grevlex Gröbner bases for the systems Katsura-11, -12, -13, and Eco-14, -15.  
\end{description}

Figure~\ref{fig:sizeplot} plots the size of the MDD of each of the monomial ideals in the dataset as a function of the size of the list of generators.
To appreciate the benefit of maximal sharing, the figure also represents the size of the corresponding ideal tree.
Size is measured in terms of an idealized memory footprint.
Since the actual memory usage of a MDD depends on many implementation-dependent factors,
we associate to each data structure a sequential representation as a flat list of machine words.
For the list of generators, this is the concatenation of all exponent vectors.
For MDDs, we concatenate the representations of the distinct subtrees, where a subtree $t$ with domain~$\left\{ a_1,\dotsc,a_r \right\}$ is encoded as the sequence $r, a_1, p_1,\dotsc, a_r, p_r$, where~$p_i$ is the index of the node corresponding to~$t(a_i)$ in this list.
Thus each node of the MDD contributes one word and each edge two words. The tree size is measured similarly.

We draw several conclusions from this experiment.
First, maximal sharing is what makes the data structure usable: it reduces the tree size by at least an order of magnitude. In some instances, the tree size exceeds a terabyte.

Second, in many cases the MDD is larger than the generator list.
Then, depending on the workload, the construction cost of the MDD may exceed its benefit. In any case, the MDD may still be useful for other computations in commutative algebra, such as Hilbert series or primary decomposition of monomial ideals, which will be developed in forthcoming work.

Last, there are also many instances of the opposite behavior, where the MDD is more compact than the generator list.
Most importantly, this is the case for monomial ideals coming from Gröbner bases.
This is related to the weakly-grevlex \parencite{MorenoSocias_2003} or Borel-fixed structure \parencite{Green_1998} that these monomial ideals often exhibit,
exemplified by the ideal~$I_{n,d}$ discussed above.
The phenomenon will be investigated in future work from a theoretical point of view.

\section{Application to Signature Gröbner Bases}

\subsection{Theory}

We now present an application of MDDs to the computation of signature Gröbner
  bases (sGB). Signature Gröbner bases have their origin in the \Ffive
algorithm \cite{Faugere_2002} and aim to avoid reductions to zero in Buchberger's 
algorithm, which do not contribute to the output but still take time to compute. 
These reductions to zero are avoided by introducing additional monomial ideal 
membership tests which our data structure is designed to do better.
To illustrate how we apply MDDs in sGB computations we
present here a simplified and specific version of an sGB
algorithm (see \cite{EderFaugere_2017,Lairez_2024} for the full theory).

Let $R:= \mathbb{K}[x_1,\dots,x_n]$ be a
polynomial ring over a field $\mathbb{K}$ and let $\prec$ be a monomial order on
the set of $n$-variate monomials. For $f\in R$ nonzero, $\lm(f)$ denotes the
largest monomial occurring in the support of $f$ with respect to the
monomial order $\prec$, and, for~$S\subseteq R\setminus \set{0}$,
$\lm(S)$ denotes~$\setof{\lm(p)}{p\in S}$.
As in the rest of the paper, we identify the set of monomials in $R$ with~$\mathbb{N}^n$.
Finally, for a set $S\subseteq R\setminus \set{0}$ and a nonzero polynomial $p\in R$, write
\[\crit(S,p):= \setof{\frac{\lcm(\lm(p),\lm(s))}{\lm(p)}}{s\in S}.\]

The key component of any Gröbner basis computation is the {\em
  reduction} process, which works as follows (see
\parencite[\S2.3]{CoxLittleOShea_2015} for further details): Given a
finite set $S\subset R\setminus \set{0}$ and a nonzero polynomial $p\in R$, let $t$ be a term in
$p$ such that there is $s\in S$ and a term $u$ with
$t = u\cdot \lm(s)$.  We replace $p$ by $p - us$ and repeat this
process, looking for further terms in the result that are divisible by
an element in $\lm(S)$, until no term in the resulting polynomial lies
in the monomial ideal generated by $\lm(S)$. During this reduction
process we can use an MDD representing $\langle \lm(S) \rangle$ to check whether
there exists an element $s\in S$ whose leading term divides $t$ and,
only if the answer is yes, look for the actual element $s$ using a
linear search. In practice, this can save many iterations
over $\lm(S)$.

We now apply MDDs specifically to sGB computations.
Suppose that we are given a Gröbner basis
$G\subset \mathbb{K}[x_1,\dots,x_n]$ with respect to $\prec$ as well as a
polynomial $f\in \mathbb{K}[x_1,\dots,x_n]$. In its simplest form, the
\Ffive{} algorithm aims to compute a Gröbner basis for the ideal
generated by $G$ and $f$. We can then compute a Gröbner
basis of an ideal generated by $f_1,\dots,f_r\in R$ iteratively,
introducing the~$f_i$ one after the other.

We now give a (too optimistic) optimization of Buchberger's algorithm for computing
a Gröbner basis for $\langle G,f\rangle$:
\begin{enumerate}
\item Initialize $P:=\{(\mathbf{0},f)\}\subset \mathbb{N}^n\times R$ and
  $G_{\text{new}}:= G$. Here $\mathbf{0}\in \mathbb{N}^n$ is the all zeros
        vector, to be understood as the monomial $1\in R$.
  \item \label{step:loop} While~$P$ is not empty:
        \begin{enumerate}
          \item \label{step:reduce} Choose some tuple $(u,p)\in P$, remove it from
                $P$, and reduce $u p$ by $G_{\text{new}}$ to obtain a polynomial~$q$.
          \item \label{step:spair} If~$q\neq 0$, add to $P$ the elements $(v, q)$ for each
                $v\in \crit(G_{\text{new}},q)$, add $q$ to $G_{\text{new}}$, and go
                to step \ref{step:loop}.
        \end{enumerate}
\end{enumerate}
Step \ref{step:spair} does not construct the S-pairs explicitly; this is performed implicitly in step~\ref{step:reduce}.
However, this is flawed because when we reduce~$up$, the polynomial~$p$ is already in~$G_\text{new}$, so~$up$ reduces trivially to~0.
Signatures are introduced into the process to fix the issue.

To obtain an sGB algorithm from the loop above we make the following
three modifications: First, we associate a \emph{signature} to every element of~$G_\text{new}$.
This is a monomial computed inductively as follows:
\begin{itemize}
  \item the polynomial $f$ itself gets the signature $\mathbf{0}$;
  \item a polynomial~$q$ obtained in step~\ref{step:reduce} by reducing some~$u p$ gets the signature~$\sfrak(q) = u \sfrak(p)$.
\end{itemize}

The second modification affects the reduction process in step
\ref{step:reduce}: When reducing an element $up$ we allow only
monomial multiples of elements in $G$ as reducers as well as elements
$vq$ where $q\in G_{\text{new}}\setminus G$, $v\in \mathbb{N}^n$ and
$v\sfrak(q) \prec u\sfrak(p)$. When making these restrictions in step~\ref{step:reduce},
we say that we are {\em regularly reducing} $(u,p)$
by $G_{\text{new}}$.

The third and final modification is that in step \ref{step:reduce} we
always choose a tuple $(u,p) \in P$ such that $u\sfrak(p)$ is minimal.

The key algorithmic benefit of the introduction of signatures now lies
in the following lemma:

\begin{lemma}[Lemmas 4.2 and 6.1 in \cite{EderFaugere_2017}]
  \label{lem:sig}
  Any two pairs
  $(u,p)$ and $(v,q)\in P$ with $u\sfrak(p) = v\sfrak(q)$ regularly reduce to
  the same polynomial by $G_{\text{new}}$. Moreover, an
  element $(u,p)\in P$ regularly reduces to zero by $G_{\text{new}}$ if one
  of the following two statements holds:
  \begin{enumerate}
  \item An element $(v,q)$ that was previously in $P$ regularly reduced
    to zero by $G_{\text{new}}$ and $v\sfrak(q)$ divides $u\sfrak(p)$.
  \item We have $u\sfrak(p)\in \langle \lm(G) \rangle$.
  \end{enumerate}
\end{lemma}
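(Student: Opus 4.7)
The plan is to prove the three assertions jointly, by induction on the signature $\sigma = u\sfrak(p)$ under the monomial order~$\prec$. The key structural feature enabling the induction is that step~\ref{step:reduce} always processes the element of~$P$ with minimum signature, so by the time we analyze a given~$(u,p)$ we may assume the lemma holds for every element of~$P$ of strictly smaller signature, and in particular that all such regular reductions (to zero or otherwise) are well-defined and already available in~$G_\text{new}$.

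For the uniqueness assertion (which I read as $u\sfrak(p)=v\sfrak(q)=\sigma$, taking the stated $v\sfrak(p)$ to be a typo), I would first verify that a single regular reduction step preserves the signature of the polynomial being reduced: the two kinds of allowed reducers, namely monomial multiples of elements of~$G$ and multiples $wq'$ with $w\sfrak(q')\prec\sigma$, both contribute only terms of signature strictly less than~$\sigma$. Unwinding the inductive definition of signatures, both~$up$ and $vq$ can be written as $\sigma f$ plus terms of signature~$\prec\sigma$, so their difference $up - vq$ has signature strictly less than~$\sigma$ and, by the induction hypothesis, admits a unique regular reduction. A local-confluence step at signature level~$\sigma$—any two admissible one-step rewrites of the same leading term can be completed using the already-resolved reductions of the sub-leading tails—then forces the regular reductions of $up$ and $vq$ to coincide.

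For part (1) of the second claim, assume $(v,q)$ regularly reduced to zero and $v\sfrak(q)\mid u\sfrak(p)$. Setting $w=u\sfrak(p)/v\sfrak(q)$, I would multiply the entire reduction trace $vq\to 0$ by~$w$, obtaining a regular reduction of $(wv)q$ to zero at signature~$u\sfrak(p)$; the small check is that each reducer admissible at signature $v\sfrak(q)$ remains admissible after scaling by~$w$ at signature~$u\sfrak(p)$. Uniqueness then transfers the zero reduction to~$(u,p)$. For part (2), if $u\sfrak(p)\in\langle\lm(G)\rangle$, I would pick $g\in G$ with $\lm(g)\mid u\sfrak(p)$, write $u\sfrak(p)=m\lm(g)$, and exhibit the trivial commutator syzygy carried by~$mg$: starting from an element of signature $u\sfrak(p)$ built from $mg\cdot f$, the head term $m\lm(g)\cdot f$ is cancelled using $mg\in \langle G\rangle$ (a permitted reducer at any signature), and the remaining tails have signature $\prec\sigma$ and thus reduce to zero by induction. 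Uniqueness transfers this to~$(u,p)$.

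The main obstacle is the uniqueness claim: it subsumes a confluence property of regular reduction that is not evident from the definition alone, and handling it rigorously requires simultaneously tracking the signature and the polynomial tail throughout the rewriting process. This is exactly what \cite[Lemma~4.2]{EderFaugere_2017} achieves, and I would follow its structure rather than rederive it from scratch; once uniqueness is secured, both parts of the second claim reduce to exhibiting a specific zero-reducing trace at the correct signature, as sketched above.
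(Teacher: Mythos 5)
This lemma is not proved in the paper at all: it is imported by citation (Lemmas~4.2 and~6.1 of \cite{EderFaugere_2017}), so there is no internal argument to compare yours against, and your decision to defer the uniqueness/confluence core to that same reference puts you on essentially the same footing as the authors. Your surrounding sketch matches the standard argument: the typo fix ($v\sfrak(q)$ rather than $v\sfrak(p)$) is correct; for criterion~(1), multiplying the recorded zero-reduction trace by $w=u\sfrak(p)/v\sfrak(q)$ works because multiplication by a monomial preserves the strict signature inequalities defining admissible reducers and introduces no term cancellation, and because $G_{\text{new}}$ only grows, the old reducers are still available; for criterion~(2), the element built from $mgf$ with $g\in G$ and $m\lm(g)=u\sfrak(p)$ has signature $u\sfrak(p)$ and regularly reduces to zero --- here you can simplify your sketch: since $mgf\in\langle G\rangle$ and $G$ is a Gröbner basis, the whole polynomial reduces to zero using $G$-reducers alone, which are unconditionally admissible, so no induction on ``tails'' (which do not individually carry signatures) is needed.

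The one caveat worth making explicit is the scope of your induction hypothesis. As stated (``the lemma holds for every element of $P$ of strictly smaller signature'') it is too weak even for your own transfers: neither $(wv,q)$ nor the Koszul element need ever belong to $P$, so the ``same signature implies same regular reduction result'' statement must be available for arbitrary pairs of module elements of equal signature. What actually powers this is the invariant that $G_{\text{new}}$ is a signature Gröbner basis in all signatures below the one currently processed, and maintaining that invariant requires showing that the S-pairs generated through $\crit$ and processed in increasing signature order cover everything --- precisely the content of \cite{EderFaugere_2017}, Lemma~4.2, which both you and the paper take as given. With that granted, your sketch of the two criteria is correct.
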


The two criteria in Lemma \ref{lem:sig} are called the {\em syzygy}
and {\em Koszul} criteria in the literature and involve performing a
monomial membership test. Applying them during the run of a
signature-based Gröbner basis algorithm is where MDDs can
be utilized, as described in Algorithm~\ref{algo:sgb}.

\begin{algorithm}[tp]
  \caption{sGB computation with MDDs.}
  \label{algo:sgb}
  \raggedright
  \begin{inputoutput}
    \item[input] A Gröbner basis $G$ w.r.t. a monomial order $\prec$ on $R$, a polynomial $f\in R$.
    \item[output] A Gröbner basis of the ideal $\langle G \cup \{f\}\rangle$ w.r.t. $\prec$.
  \end{inputoutput}
  \begin{pseudo}
    def \fn{sGB}(G,f):\\+
    $G_{\text{new}}\gets G$;\quad  
    $P \gets \{(\mathbf{0}, f)\}$; \quad
    $t \gets $ \tn{the MDD of} $\langle \lm(G)\rangle $\\
    while $P\neq \varnothing$:\\+
    $(u,p)\gets $ \tn{an element of minimal signature in} $P$ \label{ln:choose}\\
    $P\gets P \setminus \setof{(v,q)}{v\sfrak(q) = u\sfrak(p)}$\\
    if not $\fn{contains}(t,u\sfrak(p))$:\\+
    $q\gets $ \tn{result of regularly reducing $up$ by $G_{\text{new}}$}\\
    if $q = 0$:\\+
    $t\gets \fn{insert}(t,u\sfrak(p))$\\-
    else:\\+
    $P \gets P \cup \setof{(v,q)}{v\in \crit(G_{\text{new}},q)}$\\
    \tn{$G_{\text{new}}\gets G_{\text{new}}\cup \{q\}$}\\---
    return $G_{\text{new}}$
  \end{pseudo}
\end{algorithm}

  A natural question regarding Algorithm \ref{algo:sgb} is which element of minimal
  signature to choose in line \ref{ln:choose} if there are multiple choices with the same
  signature. In sGB algorithms, this is handled by a procedure called {\em rewriting}
  which involves further monomial divisibility checks \cite[Section 7]{EderFaugere_2017} that are
  slightly more involved than checking membership in a monomial ideal. How to combine
  this rewriting procedure with MDDs will be the subject of future work.

\subsection{Benchmarks}

\begin{table*}[!ht]
\centering
\scriptsize
\begin{tabular*}{\textwidth}{l@{\extracolsep{\fill}}rrrrrrrrrrrr}
\toprule
 & \multicolumn{2}{c}{\textit{AlgebraicSolving.jl}} & \multicolumn{7}{c}{\textit{AlgebraicSolving.jl with MDD}} & \multicolumn{2}{c}{speed-up} \\
\cmidrule(lr){2-3} \cmidrule(lr){4-10} \cmidrule(lr){11-12}
name & time & symbolic part & time & symbolic part & time memb. & nb memb. & time ins. & nb of insertions & nb of nodes & overall & symbolic\\
\midrule
cp\_d\_3\_n\_6\_p\_2 & 5\,s & 82\,$\%$ & 2\,s & 61\,$\%$ & 0.02 & 211\,k & 0.05 & 1561 & 47 & $\times$\,2.1 & $\times$\,2.8\\
cp\_d\_3\_n\_7\_p\_7 & 5\,s & 73\,$\%$ & 3\,s & 59\,$\%$ & 0.04 & 456\,k & 0.01 & 1020 & 47 & $\times$\,1.6 & $\times$\,1.9\\
cp\_d\_3\_n\_8\_p\_2 & 964\,s & 63\,$\%$ & 559\,s & 37\,$\%$ & 0.78 & 4.7\,M & 3.54 & 11343 & 85 & $\times$\,1.7 & $\times$\,2.9\\
cp\_d\_4\_n\_8\_p\_8 & 89\,s & 59\,$\%$ & 61\,s & 47\,$\%$ & 0.31 & 3.6\,M & 0.06 & 2967 & 58 & $\times$\,1.5 & $\times$\,2.1\\
cp\_d\_4\_n\_9\_p\_9 & 1822\,s & 50\,$\%$ & 1380\,s & 34\,$\%$ & 2.15 & 29\,M & 0.35 & 8624 & 74 & $\times$\,1.3 & $\times$\,1.9\\
cyclic7 & 1\,s & 98\,$\%$ & 0.8\,s & 98\,$\%$ & 0.02 & 240\,k & 0.02 & 750 & 43 & $\times$\,1.6 & $\times$\,1.6\\
cyclic8 & 68\,s & 95\,$\%$ & 32\,s & 89\,$\%$ & 0.50 & 4.1\,M & 0.36 & 3881 & 80 & $\times$\,2.1 & $\times$\,2.3 \\
eco14 & 12118\,s & 90\,$\%$ & 4150\,s & 68\,$\%$ & 5.79 & 118\,M & 0.92 & 19241 & 173 & $\times$\,2.9 & $\times$\,3.8\\
gametwo7 & 1625\,s & 95\,$\%$ & 1336\,s & 94\,$\%$ & 2.13 & 41\,M & 0.95 & 12234 & 75 & $\times$\,1.2 & $\times$\,1.2\\
katsura11 & 4\,s & 86\,$\%$ & 2\,s & 79\,$\%$ & 0.04 & 388\,k & 0.06 & 1024 & 81 & $\times$\,1.8 & $\times$\,2.0\\
katsura12 & 27\,s & 83\,$\%$ & 14\,s & 70\,$\%$ & 0.17 & 1.5\,M & 0.27 & 2048 & 103 & $\times$\,2.0 & $\times$\,2.4\\
katsura13 & 201\,s & 82\,$\%$ & 107\,s & 69\,$\%$ & 0.71 & 5.9\,M & 1.14 & 4096 & 125 & $\times$\,1.9 & $\times$\,2.2\\
noon7 & 10\,s & 87\,$\%$ & 3\,s & 60\,$\%$ & 0.08 & 985\,k & 0.02 & 1392 & 71 & $\times$\,3.1 & $\times$\,4.4\\
noon8 & 154\,s & 82\,$\%$ & 44\,s & 34\,$\%$ & 0.63 & 7.4\,M & 0.10 & 3843 & 101 & $\times$\,3.5 & $\times$\,8.4\\
noon9 & 78\,s & 99\,$\%$ & 83\,s & 99\,$\%$ & 0.58 & 7.2\,M & 0.10 & 3743 & 71 & $\times$\,0.9 & $\times$\,0.9\\
yang1 & 11569\,s & 99\,$\%$ & 6280\,s & 99\,$\%$ & 22.27 & 138\,M & 14.41 & 39188 & 947 & $\times$\,1.8 & $\times$\,1.8\\
\bottomrule
\end{tabular*}
\medskip
\caption{Comparison of \emph{AlgebraicSolving.jl} and \emph{AlgebraicSolving.jl} with MDD (our Julia implementation).
    \emph{time}: computation time of the signature Gröbner basis in seconds;
    \emph{symbolic part}: percentage of computation time spent in the symbolic computation part of the algorithm;
    \emph{time memb.}: computation time in seconds of the membership tests;
    \emph{nb memb.}: number of membership tests during the algorithm;
    \emph{time ins.}: computation time of the monomial divisibility diagram in seconds;
    \emph{nb of insertions}: number of insertions in the monomial divisibility diagram;
    \emph{nb of nodes}: number of nodes of the monomial divisibility diagram at the end of the algorithm.}
\label{table:algsolvingbenchmarks}
\end{table*}

We integrated MDDs into the implementation of signature Gröbner bases in the Julia package \emph{AlgebraicSolving.jl}. The source code is available online and distributed under the GPLv3 license.\footnote{The modified version of \emph{AlgebraicSolving.jl} is available at \url{https://github.com/algebraic-solving/AlgebraicSolving.jl}.}
In the new version, membership tests are performed using \emph{contains} (Algorithm~\ref{algo:contains}), whereas the old version uses linear search over the generator list, with the divmask optimization.

Table~\ref{table:algsolvingbenchmarks} reports the computation times obtained for sGB computations on several benchmark systems selected from the \emph{msolve} library \cite{msolveexamples}. These include the \emph{cp} systems, which consist of generic polynomial systems, as well as classical benchmark families.

An sGB algorithm, such as Algorithm \ref{algo:sgb}, has two
parts: a purely \emph{symbolic} part (divisibility tests on
signatures, finding reducers, computation of $S$-pairs, etc.) and an
\emph{arithmetic} part (the arithmetic operations involved in reducing
polynomials). The use of MDDs affects only the symbolic part.

Table~\ref{table:algsolvingbenchmarks} reports on the impact of MDDs on the \emph{total} computation time,
as well as the symbolic part of the computation time.
In most examples, the total computation time is significantly reduced by using MDDs,
with the exception of \emph{noon9}, which operates on monomial ideals with very few generators.
With MDDs, membership tests and construction overhead become negligible compared to the rest of the symbolic part of the computation time.
In this sense, MDDs solve the problem of fast membership tests in monomial ideals during Gröbner basis computations.

The results highlight the prominence of the symbolic part of the sGB algorithm in \emph{AlgebraicSolving.jl}, with many examples where it dominates the total time, and no example below 50\% (without MDDs).
With MDDs, membership testing ceases to be a bottleneck, but it is only a fraction of the symbolic workload in which monomials are processed.
We expect further gains by designing similar data structures for the remaining
  symbolic operations.

\begin{acks}
  {\small This work has been supported by the
    \grantsponsor{erc}{European Research Council
      (ERC)}{https://erc.europa.eu} under the European Union’s Horizon
    Europe research and innovation programme, grant agreement
    \grantnum{erc}{101040794} (10000~DIGITS), by the \grantsponsor{fwo}{Flanders Research
    Foundation (FWO)}{https://www.fwo.be/en/}, grant agreements \grantnum{fwo}{G0F5921N} (Odysseus)
    and \grantnum{fwo}{G023721N}, and by \grantsponsor{kul}{KU Leuven}{https://www.kuleuven.be}, grant agreement \grantnum{kul}{iBOF/23/064}.}
\end{acks}

\begin{figure*}[p]
  \centering

  \includegraphics[width=\linewidth, trim={1.5cm 0cm 1.5cm 0cm}]{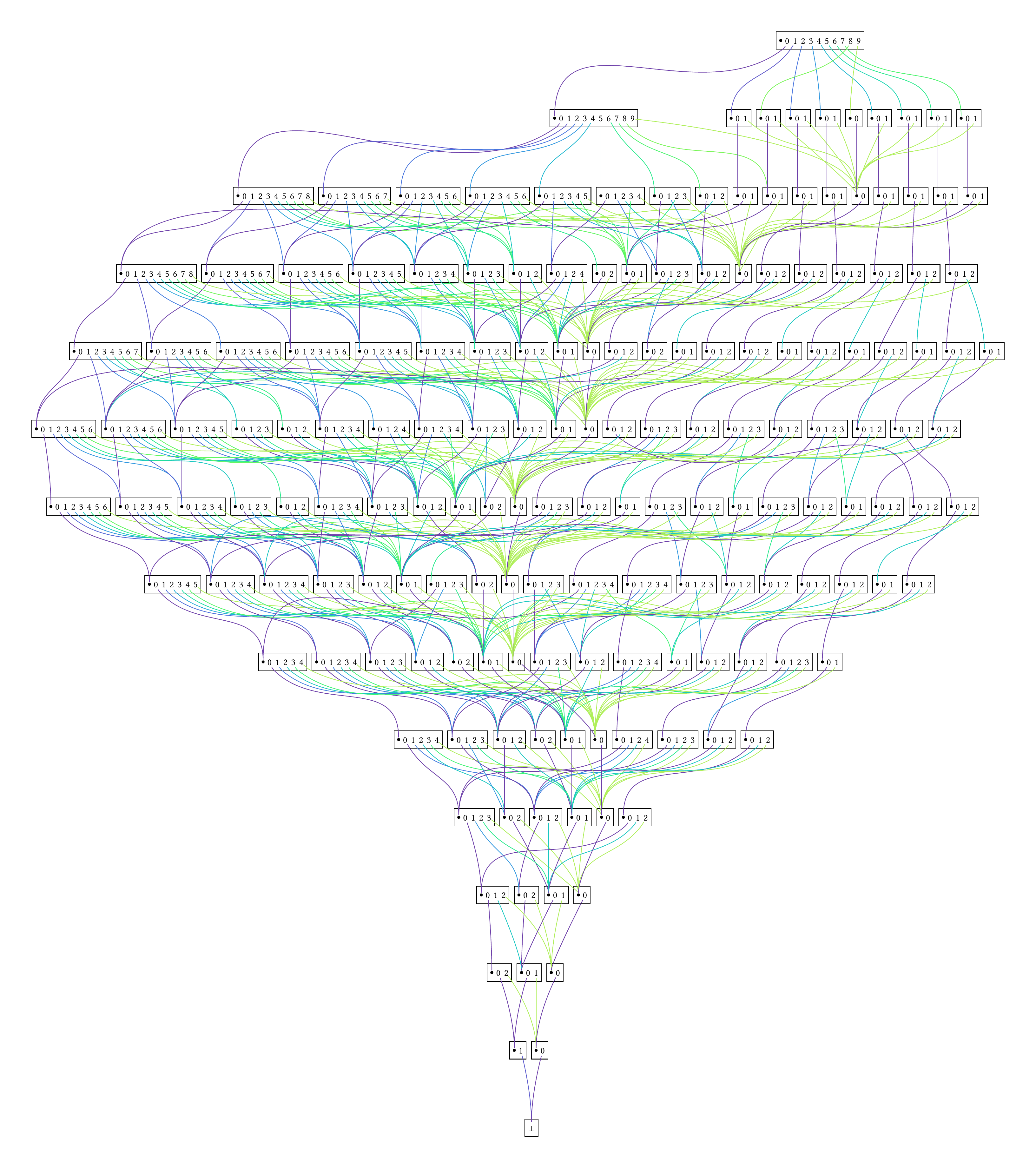}
  \caption{Monomial divisibility diagram of the monomial ideal in~$\mathbb{N}^{14}$ generated by the leading monomials of a Gröbner basis of \emph{eco-14}. The ideal is minimally generated by 2852~monomials. The MDD has 173 nodes.}
  \label{fig:eco14}
\end{figure*}

\printbibliography

\end{document}